\newtheorem{theorem}{Theorem}
\newtheorem{lemma}{Lemma}
\newtheorem{definition}{Definition}
\newtheorem{example}{Example}
\newtheorem*{remark}{Remark}
\newcommand{\Rmnum}[1]{\expandafter\@slowromancap\romannumeral #1@}
\begin{document}

	\title{Arithmetic crosscorrelation of binary $\bm{m}$-sequences with coprime periods
	}

	\author{Xiaoyan~Jing,
			Keqin~Feng
			\thanks{Xiaoyan Jing is with Research Center for Number Theory and Its Applications, Northwest University, Xi'an 710127, China (e-mail: jxymg@126.com).}
            \thanks{Feqin Feng is with Department of Mathematical Sciences, Tsinghua University, Beijing 100084, China (e-mail: fengkq@tsinghua.edu.cn).}
			
		}

\markboth{}%
{Shell \MakeLowercase{\textit{et al.}}: Bare Demo of IEEEtran.cls for IEEE Journals}

\maketitle

\IEEEpeerreviewmaketitle

\begin{abstract}
The arithmetic crosscorrelation of binary $\bm{m}$-sequences with coprime periods $2^{n_1}-1$ and $2^{n_2}-1$\ ($\gcd(n_1,n_2)=1$) is determined. The result shows that the absolute value of arithmetic crosscorrelation of such  binary $\bm{m}$-sequences is not greater than $2^{\min(n_1,n_2)}-1$.

\end{abstract}

\begin{IEEEkeywords}
arithmetic crosscorrelation, binary $\bm{m}$-sequences, trace
mapping
\end{IEEEkeywords}

\IEEEpeerreviewmaketitle

\section{Introduction and Main Result}\label{sec1}
Let $n\geq2$, $q=2^n$ and $\mathbb{F}_q$ be the finite field with $q$ elements, and
$$T: \mathbb{F}_{q}\rightarrow\mathbb{F}_2, T(a)= a+ a^2+ a^{2^2}+\cdots+ a^{2^{{n}-1}}\quad ( a\in\mathbb{F}_{q})$$ be the trace
mapping from $\mathbb{F}_{q}$ to $\mathbb{F}_{2}$. Let $\alpha$ be a primitive element of $\mathbb{F}_{q}^{*}$ which means $\mathbb{F}_{q}^{*}=\mathbb{F}_{q}\backslash\{0\}=\langle\alpha\rangle=\{1=\alpha^0, \alpha^1, \ldots, \alpha^{q-2}\}\ (\alpha^{q-1}=1)$. For each $c\in\mathbb{F}_{q}^{*}$, the binary sequence
$$\bm{A}_c=\{T(c\alpha^{\lambda})\}^{q-2}_{\lambda=0}$$
is a periodical sequence with period $q-1$, called binary $\bm{m}$-sequence. All sequences $\bm{A}_c(c\in\mathbb{F}_{q}^{*})$ are (cyclically) shifted equivalent to each other, which means, for $c=\alpha^{\tau}$, $\bm{A}_c=\bm{A}_1^{(\tau)}$ where $\bm{A}_1=\{a_{\lambda}=T(\alpha^{\lambda})\}^{q-2}_{\lambda=0}$ and $\bm{A}_1^{(\tau)}=\{b_{\lambda}\}^{q-2}_{\lambda=0}$ is a shifted sequence of $\bm{A}_1$ by $b_{\lambda}=a_{\lambda+\tau}$.

Due to their good pseudo-randomness and ideal autocorrelation, $\bm{m}$-sequences have significant applications
 in information science and engineering, as key sequence design, data encryption, spread spectrum communication and synchronous communication.

 In this paper, we concern on the arithmetic crosscorrelation of binary $\textbf{\emph{m}}$-sequences. Arithmetic correlation is another description of sequence correlation different from the classical correlation, proposed by  D. Mandelbaum in 1967
(\cite{9Mand}) in his research on arithmetic codes. Later, this notion was generalized to the nonbinary case and developed by Goresky and Klapper (\cite{3G1,4G2,5G3,6G4}) in relation to their research on sequences generated by the linear shift registers with carry.

 Now we state the definition of arithmetic autocorrelation and  crosscorrelation for (periodic) binary sequences. Let
 $$ {\bm{A}}=(a_{\lambda})^{N-1}_{\lambda=0},\  {\bm{B}}=(b_{\lambda})^{N-1}_{\lambda=0}\ (a_{\lambda}, b_{\lambda}\in\{0,1 \})$$
be two binary sequences with same period $N$. The classical crosscorrelation of $\bm{A}$ and $\bm{B}$ is defined by
$$\mathcal{C}(\bm{A}, \bm{B})=\sum^{N-1}_{\lambda=0}(-1)^{a_{\lambda}+b_{\lambda}}\in\mathbb{Z}.$$
For $0\leq\tau\leq N-1$, let $\bm{B}^{(\tau)}=\{b_{\lambda}'=b_{\lambda+\tau}\}^{N-1}_{\lambda=0}$ be the $\tau$-shifted sequence of $\bm{B}$ and
$$\mathcal{C}(\bm{A}, \bm{B}, \tau)=\mathcal{C}(\bm{A}, \bm{B}^{(\tau)})=\sum^{N-1}_{\lambda=0}(-1)^{a_{\lambda}+b_{\lambda+\tau}}\in\mathbb{Z}.$$
Then $\mathcal{C}(\bm{A}, \bm{B})=\mathcal{C}(\bm{A}, \bm{B}, 0)$. If $\bm{B}=\bm{A}$, then $\mathcal{C}(\bm{A}, \bm{A})=N$ and $\mathcal{C}^{(\tau)}(\bm{A})=\mathcal{C}(\bm{A}, \bm{A}^{(\tau)})\ (1\leq\tau\leq N-1)$ are called the classical autocorrelation of $\bm{A}$. It is known that the binary $\bm{m}$-sequences $\bm{A}$ with period $2^n-1$ have ideal autocorrelation $\mathcal{C}^{(\tau)}(\bm{A})=-1$ for all $\tau$, $1\leq\tau\leq 2^n-2$.

\begin{definition}
  Let $$I({\bm{A}})=\sum_{\lambda=0}^{N-1}a_{\lambda}2^{\lambda},\ I({\bm{B}})=\sum_{\lambda=0}^{N-1}b_{\lambda}2^{\lambda}\in\mathbb{Z}.$$
The integer $|I({\bm{A}})-I({\bm{B}})|\geq0$ can also be expressed in 2-adic expansion $$|I({\bm{A}})-I({\bm{B}})|=\sum_{\lambda=0}^{N-1}c_{\lambda}2^{\lambda} \ (c_{\lambda}\in\{0, 1\}).$$
Let $I_0=\sharp\{0\leq\lambda\leq N-1|c_{\lambda}=0\}$, $I_1=\sharp\{0\leq\lambda\leq N-1|c_{\lambda}=1\}$.

We define
\begin{align*}
		\mathcal{M}(\bm{A},\bm{B})=\left\{\begin{array}{ll}
I_0-I_1,\ \text{if}\ I({\bm{A}})\geq I({\bm{B}}),\\
I_1-I_0,\ \text{if}\ I({\bm{A}})<I({\bm{B}}).
\end{array}\right.
\end{align*}
The arithmetic crosscorrelation of ${\bm{A}}$ and ${\bm{B}}$ is the values
$$\mathcal{M}^{(\tau)}(\bm{A},\bm{B})=\mathcal{M}(\bm{A},\bm{B}^{(\tau)})\ (0\leq\tau\leq N-1).$$
If $\bm{B}=\bm{A}$ and $N$ is the minimal period of $\bm{A}$, then $\mathcal{M}(\bm{A},\bm{A})=\mathcal{M}^{(0)}(\bm{A},\bm{A})=N$ and the arithmetic autocorrelation of $\bm{A}$ is the values
 $$\mathcal{M}^{(\tau)}(\bm{A})=\mathcal{M}^{(\tau)}(\bm{A},\bm{A})=\mathcal{M}(\bm{A},\bm{A}^{(\tau)})\ (1\leq\tau\leq N-1).$$
\end{definition}
Comparing with the classical correlation, there are currently few results on the arithmetic autocorrelation and arithmetic crosscorrelation of binary sequences. Large families of binary sequences with ideal arithmetic autocorrelation $(\mathcal{M}^{(\tau)}(\bm{A})=0$ for $1\leq\tau\leq N-1)$ and ideal arithmetic crosscorrelation $(\mathcal{M}^{(\tau)}(\bm{A},\bm{B})=0$ for $0\leq\tau\leq N-1)$ have been constructed in \cite{9Mand} and \cite{3G1} respectively. For the binary $\bm{m}$-sequences $\bm{A}$ with period $2^n-1$, Chen et. al (\cite{1C1}) raised a conjecture on the absolute values $|\mathcal{M}^{(\tau)}(\bm{A})|\ (1\leq\tau\leq 2^n-2)$ and their multiplicity. This conjecture has been verified in \cite{11J1} by determining the distribution of $\{\mathcal{M}^{(\tau)}(\bm{A})|1\leq\tau\leq 2^n-2\}$. For two binary sequences $\bm{A}$ and $\bm{B}$ with period $N_1$ and $N_2$ respectively, and $\gcd(N_1, N_2)=1$. We view both of $\bm{A}$ and $\bm{B}$ as sequences with same period $N_1N_2$. Then Chen et. al (\cite{2C2}) proved that all $\mathcal{M}^{(\tau)}(\bm{A},\bm{B})\ (0\leq\tau\leq N_1N_2-1)$ have the same value $\mathcal{M}(\bm{A},\bm{B})(=\mathcal{M}^{(0)}(\bm{A},\bm{B}))$.
Particularly for  two binary $\bm{m}$-sequences $\bm{A}$ and $\bm{B}$ with period $N_1=2^{n_1}-1$ and $N_2=2^{n_2}-1$, and $\gcd(n_1, n_2)=1$, then $\gcd(N_1, N_2)=1$,  an upper bounds $|\mathcal{M}(\bm{A},\bm{B})|\leq cn_1\cdot 2^{n_2}$ is present in \cite{1C1} where $n_2>n_1$ and $c>0$ is an absolute constant.

In this paper the exact value $\mathcal{M}(\bm{A},\bm{B})$ is determined as shown in the following Theorem 1. As a direct consequence we get $|\mathcal{M}(\bm{A},\bm{B})|\leq 2^{\min(n_1,n_2)}-1$.

Now we state the main result of this paper.

Let $n_1,\ n_2\geq2$, $\gcd(n_1,n_2)=1$, $q_i=2^{n_i}$ and  $N_i=q_i-1\ (i=1,2)$. Let $\alpha_i$ be a primitive element of the finite field $\mathbb{F}_{q_i}$,
$T_i: \mathbb{F}_{q_i}\rightarrow\mathbb{F}_2$ be the trace mapping, 
$g_i(x)\in\mathbb{F}_2[x]$ be the primitive polynomial and $g_i(\alpha_i)=0\ (i=1,2)$.

\begin{theorem} Under above assumption, let $F(x)=f_0+f_1x+\cdots+f_{l-1}x^{l-1}+x^l \in \mathbb{F}_2[x]\ (0\leq l\leq n_1-1)$ be the unique polynomial satisfying $F(x)g_2(x)\equiv 1\pmod{g_1(x)}$.
Then the arithmetic crosscorrelation  of  binary $\bm{m}$-sequence
${\bm{A}}=(a_{i}=T_1(\alpha_1^{i}))^{N_1N_2-1}_{i=0}$,\  ${\bm{B}}=(b_{i}=T_2(\alpha_2^{i}))^{N_1N_2-1}_{i=0}$
is $$\mathcal{M}({\bm{A}},{\bm{B}})=(-1)^{f_0+1}(2^{n_1-l}-1).$$
Particularly, $|\mathcal{M}({\bm{A}},{\bm{B}})|\leq 2^{\min(n_1,n_2)}-1=\min\{2^{n_1}-1, 2^{n_2}-1\}$, and the equality holds if and only if, for $n_1<n_2$, $g_2(x)\equiv 1\pmod {g_1(x)}$.
\end{theorem}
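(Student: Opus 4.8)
\quad
First, by the theorem of Chen et al.\ \cite{2C2} all shifted values $\mathcal{M}^{(\tau)}(\bm{A},\bm{B})$, $0\le\tau\le N-1$, coincide, so it suffices to compute $\mathcal{M}^{(0)}(\bm{A},\bm{B})=\mathcal{M}(\bm{A},\bm{B})$. Write $N=N_1N_2$ and $D=I(\bm{A})-I(\bm{B})$. Since $\bm{A}$ has period $N_1\mid N$ and $\bm{B}$ has period $N_2\mid N$, the binary digit of $I(\bm{A})$ (resp.\ of $I(\bm{B})$) in position $\lambda$ is $a_\lambda=T_1(\alpha_1^{\lambda})$ (resp.\ $\beta_\lambda=T_2(\alpha_2^{\lambda})$), and the balance of $\bm{m}$-sequences gives the binary digit sums $s_2\big(I(\bm{A})\big)=N_2\cdot 2^{n_1-1}$, $s_2\big(I(\bm{B})\big)=N_1\cdot 2^{n_2-1}$. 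Straight from the Definition, $\mathcal{M}(\bm{A},\bm{B})=\operatorname{sgn}(D)\,\big(N-2\,s_2(|D|)\big)$ (with $\operatorname{sgn}(D)=+1$ if $D\ge 0$), so the whole problem reduces to finding $s_2(|D|)$ and the sign of $D$.

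Second, I would read $s_2(|D|)$ off the carry pattern of the subtraction. Assume first $D>0$ and perform the binary subtraction $I(\bm{A})-I(\bm{B})$; the borrow sequence $(b_\lambda)$ satisfies $b_0=0$ and the run rule
\[
b_{\lambda+1}=\begin{cases}\beta_\lambda, & a_\lambda\ne\beta_\lambda,\\[2pt] b_\lambda, & a_\lambda=\beta_\lambda,\end{cases}
\]
while the $\lambda$-th digit of $D$ is $c_\lambda=a_\lambda\oplus\beta_\lambda\oplus b_\lambda$. The Kummer-type identity for subtraction gives $s_2(|D|)=s_2\big(I(\bm{A})\big)-s_2\big(I(\bm{B})\big)+\operatorname{Bor}$ with $\operatorname{Bor}=\#\{\lambda:\,b_\lambda=1\}$. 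Substituting the digit sums above and using $N_1\cdot 2^{n_2}-N_2\cdot 2^{n_1}=N_1-N_2$, the expression collapses to
\[
\mathcal{M}(\bm{A},\bm{B})=(N_1-N_2)+\sum_{\lambda=0}^{N-1}(-1)^{b_\lambda}\qquad(D>0),
\]
and, by the antisymmetry $\mathcal{M}(\bm{A},\bm{B})=-\mathcal{M}(\bm{B},\bm{A})$ (valid since $D\ne 0$, as $\bm{A},\bm{B}$ have different minimal periods), the case $D<0$ is the mirror image. Thus the problem is reduced to computing the borrow statistics $\sum_\lambda(-1)^{b_\lambda}$ and, separately, the sign of $D$.

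Third — the core of the argument — I would make the borrow sequence explicit via the Chinese Remainder Theorem and the polynomial $F$. Since $\gcd(N_1,N_2)=1$, the map $\lambda\mapsto(\lambda\bmod N_1,\lambda\bmod N_2)$ identifies $\mathbb{Z}/N$ with $\mathbb{Z}/N_1\times\mathbb{Z}/N_2$; working inside $\mathbb{F}_{2^{n_1n_2}}\supseteq\mathbb{F}_{q_1},\mathbb{F}_{q_2}$, the element $\gamma:=\alpha_1\alpha_2$ has order $N$ and $\gamma^{\lambda}$ simultaneously encodes $\alpha_1^{\lambda}$ and $\alpha_2^{\lambda}$. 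By the run rule, $(b_\lambda)$ is determined by the disagreement set $\{\lambda:\,T_1(\alpha_1^{\lambda})\ne T_2(\alpha_2^{\lambda})\}$ together with the value $\beta_\lambda$ on it: $b_\lambda$ is constant on each block between consecutive disagreement positions and jumps to $\beta_k$ at the lower endpoint $k$. The relation $F(x)g_2(x)\equiv 1\pmod{g_1(x)}$, i.e.\ $g_2(\alpha_1)=F(\alpha_1)^{-1}$ in $\mathbb{F}_{q_1}$, is the input that controls this: it enters through the comparison of the truncations $I(\bm{A})\bmod 2^{\lambda}$ and $I(\bm{B})\bmod 2^{\lambda}$ that decide whether $b_\lambda=1$; reduced modulo $2^{N_1}-1$ and $2^{N_2}-1$ (where $2^{N_i}\equiv 1$) these truncations become cyclic pieces of $I_{N_1}(\bm{A})$ and $I_{N_2}(\bm{B})$, and matching them through the isomorphism $\mathbb{Z}/\big((2^{N_1}-1)(2^{N_2}-1)\big)\cong\mathbb{Z}/(2^{N_1}-1)\times\mathbb{Z}/(2^{N_2}-1)$ — available precisely because $\gcd(2^{N_1}-1,2^{N_2}-1)=1$ — should pin the comparison pattern, hence $\operatorname{Bor}$ and $\operatorname{sgn}(D)$, to just two parameters of $F$: its degree $l$ and its constant term $f_0$. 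The bookkeeping should then yield $\sum_\lambda(-1)^{b_\lambda}+(N_1-N_2)=(-1)^{f_0+1}(2^{n_1-l}-1)$.

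Finally, the bound is formal. Since $F$ is, by definition, the polynomial of degree $<n_1$ with $F(x)g_2(x)\equiv 1\pmod{g_1(x)}$, we have $0\le l\le n_1-1$; moreover, if $n_1>n_2$ then $l\ge n_1-n_2$, for otherwise $\deg(Fg_2)<n_1$ would force $Fg_2=1$, impossible since $\deg g_2=n_2\ge 2$. Hence $n_1-l\le\min(n_1,n_2)$ and
\[
|\mathcal{M}(\bm{A},\bm{B})|=2^{n_1-l}-1\le 2^{\min(n_1,n_2)}-1=\min\{2^{n_1}-1,\,2^{n_2}-1\}.
\]
Equality requires $n_1-l=\min(n_1,n_2)$; when $n_1<n_2$ this means $l=0$, i.e.\ $F=1$, i.e.\ $g_2(x)\equiv 1\pmod{g_1(x)}$. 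I expect the third step to be the real obstacle: one must control not merely how often the two trace sequences agree along $\mathbb{Z}/N$, but their precise interleaving, finely enough to determine the cumulative borrow and the sign of $D$.
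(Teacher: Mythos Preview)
Your reductions in Steps 1, 2 and 4 are correct. The identity $\mathcal{M}(\bm{A},\bm{B})=(N_1-N_2)+\sum_{\lambda}(-1)^{b_\lambda}$ (for $D>0$) is in fact equivalent to the paper's starting point: the paper uses the formula $\mathcal{M}(\bm{A},\bm{B})=N_1N_2-2\sum_t\bigl(tN(0,1;t)+N(1,0;t)\bigr)$ from \cite{11J1}, where $N(0,1;t)$ counts indices $i$ with $(a_i,\beta_i)=(0,1)$ followed by exactly $t$ agreements and then a disagreement; since your borrow is $1$ precisely on the $t{+}1$ positions following such an $i$, one has $\mathrm{Bor}=\sum_t(t{+}1)N(0,1;t)$ up to a boundary term, and the two reductions match after using $\sum_tN(1,0;t)=2^{n_1-1}(2^{n_2-1}-1)$.

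The gap is Step 3, which you yourself flag. As written it is a hope, not an argument, and the specific plan is off track. Controlling $b_\lambda$ by comparing $I(\bm{A})\bmod 2^{\lambda}$ with $I(\bm{B})\bmod 2^{\lambda}$ and then ``reducing modulo $2^{N_1}-1$ and $2^{N_2}-1$'' does nothing: these moduli are enormously larger than the period $N=N_1N_2$, and a truncation of length $\lambda<N$ is unchanged under such reduction. The only CRT that is genuinely available is the modest one on indices, $\mathbb{Z}/N\cong\mathbb{Z}/N_1\times\mathbb{Z}/N_2$, and that lets you factor sums over $i$, not compare partial digit sums. Nothing in your sketch connects the borrow pattern to $F$.

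What the paper actually does for this step is a character--sum computation. One writes
\[
N(0,1;t)=\frac{1}{2^{t+3}}\sum_{c'_0,c_0,\ldots,c_{t+1}\in\mathbb{F}_2}(-1)^{c_0+c_{t+1}}\sum_{i=0}^{N-1}(-1)^{T_1(\alpha_1^{i}f_1(\alpha_1))+T_2(\alpha_2^{i}f_2(\alpha_2))},
\]
with $f_1(x)=c'_0+c_1x+\cdots+c_{t+1}x^{t+1}$ and $f_2(x)=c_0+c_1x+\cdots+c_{t+1}x^{t+1}$. The inner sum factors via the index CRT into a product of complete additive characters over $\mathbb{F}_{q_1}^{*}$ and $\mathbb{F}_{q_2}^{*}$, each of which is $q_i-1$ or $-1$ according as $f_i(\alpha_i)=0$ or not, i.e.\ according as $g_i(x)\mid f_i(x)$. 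Splitting into $c'_0=c_0$ and $c'_0\ne c_0$ isolates the only nontrivial contribution: the condition $g_2(x)\mid f_2(x)$ together with $f_2(\alpha_1)=1$, which forces $f_2(x)=F(x)g_2(x)$ (or $F(x)g_2(x)+g_1(x)g_2(x)$ at the top degree). This is exactly where $(f_0,l)$ enter; everything else is summing elementary series of the form $\sum_t t\,2^{-t}$. Your outline is missing precisely this mechanism.
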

We prove Theorem 1 in next section.
\begin{remark}
For $n_1=1$, $g_1(x)=1+x$, the polynomial $F(x)$ satisfying $F(x)g_2(x)\equiv 1\pmod{1+x}$ and $\deg F(x)\leq 1-1=0$ is $F(x)=1$. In
this case $f_0=1$, $l=0$ and $(2^{n_1-l}-1)(-1)^{f_0+1}=1$. On the other hand, both of ${\bm{A}}=(a_{i}=1)_{i\geq0}$ and ${\bm{B}}=(b_{i}=T_2(\alpha_2^{i}))^{2^{n_2}-2}_{i=0}$ are considered as binary sequences with period $q_2=2^{n_2}-1$. We have $I(A)=\sum_{i=0}^{q_2-1}2^i$ and $I(B)=\sum_{i=0}^{q_2-1}b_i2^i$. Therefore $I(A)-I(B)=\sum_{i=0}^{q_2-1}(1-b_i)\cdot2^i>0$, and by definition,
$$\mathcal{M}(\bm{A}, \bm{B})=\sharp\{0\leq i\leq q_2-1|b_i=1\}-\sharp\{0\leq i\leq q_2-1|b_i=0\}=2^{n_2-1}-(2^{n_2-1}-1)=1$$
which means that Theorem 1 also holds for the case $n_1=1$ or $n_2=1$.
\end{remark}

\begin{example}
Let $n_1=3$, $n_2=4$, $g_1(x)=x^3+x+1$, $g_2(x)=x^4+x+1$, then we have $N_1N_2=105$, the
 two binary $\bm{m}$-sequences
${\bm{A}}$ and ${\bm{B}}$ are
\begin{align*}
{\bm{A}}=(&1,0,1,1,1,0,0,1,0,1,1,1,0,0,1,0,1,1,1,0,0,1,0,1,1,1,0,0,1,0,1,1,1,0,0,\\
          &1,0,1,1,1,0,0,1,0,1,1,1,0,0,1,0,1,1,1,0,0,1,0,1,1,1,0,0,1,0,1,1,1,0,0,\\
          &1,0,1,1,1,0,0,1,0,1,1,1,0,0,1,0,1,1,1,0,0,1,0,1,1,1,0,0,1,0,1,1,1,0,0),\\
{\bm{B}}=(&1,0,0,1,1,0,1,0,1,1,1,1,0,0,0,1,0,0,1,1,0,1,0,1,1,1,1,0,0,0,1,0,0,1,1,\\
          &0,1,0,1,1,1,1,0,0,0,1,0,0,1,1,0,1,0,1,1,1,1,0,0,0,1,0,0,1,1,0,1,0,1,1,\\
          &1,1,0,0,0,1,0,0,1,1,0,1,0,1,1,1,1,0,0,0,1,0,0,1,1,0,1,0,1,1,1,1,0,0,0).
\end{align*}
By calculating we get $F(x)=x^2+x$, which implies $l=2$ and $f_0=0$, then we have
$$\mathcal{M}({\bm{A}},{\bm{B}})=(-1)^{f_0+1}(2^{n_1-l}-1)=-1.$$
This is consistent with the result of direct calculations by the definition of arithmetic crosscorrelation.
\end{example}

\begin{example}
Let $n_1=3$, $n_2=5$, $g_1(x)=x^3+x^2+1$, $g_2(x)=x^5+x^3+1$, then we have $N_1N_2=217$, the
 two binary $\bm{m}$-sequences
${\bm{A}}$ and ${\bm{B}}$ are
\begin{align*}
{\bm{A}}=(&1,1,1,0,1,0,0,1,1,1,0,1,0,0,1,1,1,0,1,0,0,1,1,1,0,1,0,0,1,1,1,\\
          &0,1,0,0,1,1,1,0,1,0,0,1,1,1,0,1,0,0,1,1,1,0,1,0,0,1,1,1,0,1,0,\\
          &0,1,1,1,0,1,0,0,1,1,1,0,1,0,0,1,1,1,0,1,0,0,1,1,1,0,1,0,0,1,1,\\
          &1,0,1,0,0,1,1,1,0,1,0,0,1,1,1,0,1,0,0,1,1,1,0,1,0,0,1,1,1,0,1,\\
          &0,0,1,1,1,0,1,0,0,1,1,1,0,1,0,0,1,1,1,0,1,0,0,1,1,1,0,1,0,0,1,\\
          &1,1,0,1,0,0,1,1,1,0,1,0,0,1,1,1,0,1,0,0,1,1,1,0,1,0,0,1,1,1,0,\\
          &1,0,0,1,1,1,0,1,0,0,1,1,1,0,1,0,0,1,1,1,0,1,0,0,1,1,1,0,1,0,0),\\
{\bm{B}}=(&1,0,1,0,1,1,1,0,1,1,0,0,0,1,1,1,1,1,0,0,1,1,0,1,0,0,1,0,0,0,0,\\
          &1,0,1,0,1,1,1,0,1,1,0,0,0,1,1,1,1,1,0,0,1,1,0,1,0,0,1,0,0,0,0,\\
          &1,0,1,0,1,1,1,0,1,1,0,0,0,1,1,1,1,1,0,0,1,1,0,1,0,0,1,0,0,0,0,\\
          &1,0,1,0,1,1,1,0,1,1,0,0,0,1,1,1,1,1,0,0,1,1,0,1,0,0,1,0,0,0,0,\\
          &1,0,1,0,1,1,1,0,1,1,0,0,0,1,1,1,1,1,0,0,1,1,0,1,0,0,1,0,0,0,0,\\
          &1,0,1,0,1,1,1,0,1,1,0,0,0,1,1,1,1,1,0,0,1,1,0,1,0,0,1,0,0,0,0,\\
          &1,0,1,0,1,1,1,0,1,1,0,0,0,1,1,1,1,1,0,0,1,1,0,1,0,0,1,0,0,0,0).
\end{align*}
By calculating we get $F(x)=x^2+1$, which implies $l=2$ and $f_0=1$, then we have
$$\mathcal{M}({\bm{A}},{\bm{B}})=(-1)^{f_0+1}(2^{n_1-l}-1)=1.$$
This is consistent with the result of direct calculations by the definition of arithmetic crosscorrelation.
\end{example}

\section{Proof of Theorem 1}\label{sec2}
In this section we fix the following notations.

$n_1,\ n_2\geq2$, and for $i=1, 2$, $q_i=2^{n_i}$, $N_i=q_i-1$, $\mathbb{F}^{\ast}_{q_i}= \langle\alpha_i\rangle$, $\gcd(n_1,n_2)=1$;
$g_i(x)\in\mathbb{F}_2[x]$ is the primitive polynomial, $\deg g_i(x)=n_i$, $g_i(\alpha_i)=0$;
$T_i: \mathbb{F}_{q_i}\rightarrow\mathbb{F}_2$ is the trace mapping from $\mathbb{F}_{q_i}$ to $\mathbb{F}_2$;
${\bm{A}}=(a_{i}=T_1(\alpha_1^{i}))^{N_1-1}_{i=0}$ and ${\bm{B}}=(b_{i}=T_2(\alpha_2^{i}))^{N_2-1}_{i=0}$ are binary $\bm{m}$-sequences with minimal period $N_1$ and $N_2$ respectively. We view both of ${\bm{A}}$ and ${\bm{B}}$ as sequences with same period $N_1N_2$.

Firstly we state some basic facts on binary $\bm{m}$-sequences.
\begin{lemma}\label{lem1}
(1). Let $\bm{S}=(s_i)_{i=0}^{2^n-2}$ be a binary $\bm{m}$-sequence with period $2^n-1\ (n\geq2)$. For $t\geq1$, $\bm{c}=(c_1, \ldots, c_t)\in\mathbb{F}_2^t$, let
$$\mathcal{N}(\bm{S}, \bm{c})=\sharp\{0\leq\lambda\leq 2^n-2|(s_{\lambda}, s_{\lambda+1}, \ldots, s_{\lambda+t-1})=\bm{c}\}.$$
Then for $1\leq t\leq n$
\begin{align*}
\mathcal{N}(\bm{S}, \bm{c})=\left\{
\begin{array}{ll}
2^{n-t},\ &if(c_1, \ldots, c_t)\neq(0, \dots, 0),\\
2^{n-t}-1,\ &if(c_1, \ldots, c_t)=(0, \dots, 0),
\end{array}\right.
\end{align*}
and for $t\geq n+1$ and $\bm{c}=(0, \ldots, 0)\in\mathbb{F}_2^t$, $\mathcal{N}(\bm{S}, \bm{c})=0$.

(2). The binary $\bm{m}$-sequences  ${\bm{A}}$ and ${\bm{B}}$ can be generated by linear shift registers with level $n_1$ and $n_2$ respectively. Then the sequence ${\bm{A}}+{\bm{B}}=\{a_i+b_i\in\mathbb{F}_2\}_{i=0}^{N_1N_2-1}$ can be generated by a linear shift register with level $n_1+n_2$. As a direct consequence, there is no $\lambda\ (0\leq\lambda\leq N_1N_2-1)$ such that $a_i+b_i=0\in\mathbb{F}_2$ for all $i=\lambda, \lambda+1, \ldots, \lambda+n_1+n_2-1$.
\end{lemma}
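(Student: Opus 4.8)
The plan is to treat the two assertions of Lemma~\ref{lem1} by separate standard tools: part~(1) through additive character sums attached to the trace representation of $\bm{S}$, and part~(2) through the annihilator (characteristic-polynomial) structure of linearly recurrent sequences.

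For part~(1), I would write the $\bm{m}$-sequence as $s_{\lambda}=T(\beta\alpha^{\lambda})$ with $\alpha$ a primitive element of $\mathbb{F}_{q}$ ($q=2^{n}$) and $\beta\in\mathbb{F}_{q}^{*}$ fixed, since every $\bm{m}$-sequence of period $2^{n}-1$ has this form. Encoding each coordinate match by the indicator $\frac{1}{2}\big(1+(-1)^{s_{\lambda+j-1}+c_{j}}\big)$, I obtain
$$\mathcal{N}(\bm{S},\bm{c})=\frac{1}{2^{t}}\sum_{\lambda=0}^{2^{n}-2}\prod_{j=1}^{t}\Big(1+(-1)^{T(\beta\alpha^{\lambda+j-1})+c_{j}}\Big).$$
Expanding the product over subsets $J\subseteq\{1,\dots,t\}$ and using the $\mathbb{F}_{2}$-linearity of $T$, the term indexed by $J$ is $(-1)^{\sum_{j\in J}c_{j}}$ times the character sum $\sum_{\lambda}(-1)^{T(\beta\alpha^{\lambda}\gamma_{J})}$, where $\gamma_{J}=\sum_{j\in J}\alpha^{j-1}$.

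The decisive point is that for $1\le t\le n$ the powers $1,\alpha,\dots,\alpha^{t-1}$ are linearly independent over $\mathbb{F}_{2}$ (because the minimal polynomial of $\alpha$ has degree $n$), so $\gamma_{J}\ne 0$ for every nonempty $J$. Then $\beta\alpha^{\lambda}\gamma_{J}$ runs over all of $\mathbb{F}_{q}^{*}$, giving $\sum_{\lambda}(-1)^{T(\beta\alpha^{\lambda}\gamma_{J})}=\sum_{x\in\mathbb{F}_{q}^{*}}(-1)^{T(x)}=-1$, whereas $J=\varnothing$ contributes $2^{n}-1$. Summing these contributions and using $\sum_{J}(-1)^{\sum_{j\in J}c_{j}}=\prod_{j=1}^{t}\big(1+(-1)^{c_{j}}\big)$, which equals $2^{t}$ when $\bm{c}=\bm{0}$ and $0$ otherwise, yields at once $\mathcal{N}(\bm{S},\bm{c})=2^{n-t}$ for $\bm{c}\ne\bm{0}$ and $\mathcal{N}(\bm{S},\bm{c})=2^{n-t}-1$ for $\bm{c}=\bm{0}$. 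For $t\ge n+1$ with $\bm{c}=\bm{0}$ I would simply observe that a run of $t$ zeros contains a run of $n$ zeros, which by the $t=n$ instance occurs $2^{0}-1=0$ times, so $\mathcal{N}(\bm{S},\bm{c})=0$.

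For part~(2), note that $\bm{A}$ and $\bm{B}$ are annihilated by $g_{1}(x)$ and $g_{2}(x)$ respectively, hence both, and therefore $\bm{A}+\bm{B}$, are annihilated by $g_{1}(x)g_{2}(x)$, a monic polynomial of degree $n_{1}+n_{2}$ with nonzero constant term (as $g_{1}(0)=g_{2}(0)=1$). Thus $\bm{A}+\bm{B}$ is generated by an LFSR of level $n_{1}+n_{2}$ whose recurrence is invertible. Since $n_{1}\ne n_{2}$ forces $N_{1}\ne N_{2}$, we have $\bm{A}\ne\bm{B}$, so $\bm{A}+\bm{B}$ is not the zero sequence. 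If some window $a_{i}+b_{i}=0$ held for all $i=\lambda,\dots,\lambda+n_{1}+n_{2}-1$, the length-$(n_{1}+n_{2})$ state of $\bm{A}+\bm{B}$ would be zero, and invertibility of the recurrence would propagate this in both directions to make $\bm{A}+\bm{B}$ identically zero, a contradiction. This gives the stated no-zero-window consequence.

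The main obstacle is the nonvanishing $\gamma_{J}\ne 0$ underlying part~(1): the entire evaluation rests on the linear independence of $1,\alpha,\dots,\alpha^{t-1}$, which holds precisely because $t\le n$, and this is exactly where the hypothesis on the range of $t$ is used. Once that is in place, the character-sum bookkeeping and the recurrence argument of part~(2) are routine.
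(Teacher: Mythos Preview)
Your argument is correct in both parts. Note, however, that the paper does not actually prove Lemma~\ref{lem1}: it is introduced with the phrase ``we state some basic facts on binary $\bm{m}$-sequences'' and is used without justification, so there is no proof in the paper to compare against. What you have written is a clean self-contained verification of these standard facts.

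A couple of minor remarks on your write-up. In part~(1), your bookkeeping with $\gamma_J$ and the identity $\sum_J(-1)^{\sum_{j\in J}c_j}=\prod_j(1+(-1)^{c_j})$ is exactly right; the computation collapses to $\mathcal{N}(\bm{S},\bm{c})=2^{-t}(2^n-\prod_j(1+(-1)^{c_j}))$, which gives the two cases immediately. In part~(2), your deduction that $n_1\ne n_2$ (hence $\bm{A}\ne\bm{B}$) is valid under the paper's standing hypotheses $n_1,n_2\ge 2$ and $\gcd(n_1,n_2)=1$; an alternative one-line justification is that $\bm{A}$ and $\bm{B}$ have distinct minimal polynomials $g_1\ne g_2$. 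The invertibility step (nonzero constant term of $g_1g_2$) is the right way to rule out a length-$(n_1+n_2)$ zero window, though for a periodic sequence forward propagation alone would also suffice.
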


Next, we introduce a general formula on $\mathcal{M}(\bm{A}, \bm{B})$ presented in \cite{11J1}. Let
$$\bm{C}=\bm{A}+\bm{B}=\{c_i\}_{i=0}^{N_1N_2-1},\ c_i=a_i+b_i=T_1(\alpha_1^i)+T_2(\alpha_2^i)\in\mathbb{F}_2=\{0, 1\}.$$
For $t\geq0$, let
$$N(0,1;t)=\#\{0\leq i\leq N_1N_2-1|
	a_i=0,\ b_i=1, \ a_{i+\lambda}=b_{i+\lambda}\ (1\leq \lambda\leq t),\ a_{i+t+1}\neq b_{i+t+1}\},$$
$$N(1,0;t)=\#\{0\leq i\leq N_1N_2-1|
	a_i=1,\ b_i=0, \ a_{i+\lambda}=b_{i+\lambda}\ (1\leq \lambda\leq t),\ a_{i+t+1}\neq b_{i+t+1}\}.$$
The condition $a_{i+\lambda}=b_{i+\lambda}\ (1\leq \lambda\leq t)$ means that $c_{i+\lambda}=0\in\mathbb{F}_2\ (1\leq \lambda\leq t)$. By Lemma \ref{lem1} (2) we know that for $t\geq n_1+n_2$, $N(0,1;t)=N(1,0;t)=0$.
\begin{lemma}\label{lem2}(\cite{11J1}, Theorem 4)
$\mathcal{M}(\bm{A}, \bm{B})=N_1N_2-2g(\bm{A}, \bm{B})$ where $g(\bm{A}, \bm{B})=\sum\limits_{t=0}^{n_1+n_2-1}(tN(0,1;t)+N(1,0;t)).$
\end{lemma}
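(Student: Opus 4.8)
The plan is to prove the identity $\mathcal{M}(\bm{A},\bm{B})=N_1N_2-2g(\bm{A},\bm{B})$ by a direct analysis of the binary expansion of the integer $d:=I(\bm{A})-I(\bm{B})$. Writing $d=\sum_{\lambda=0}^{N_1N_2-1}e_\lambda 2^\lambda$ with $e_\lambda=a_\lambda-b_\lambda\in\{-1,0,1\}$ (as integers), the digit $e_\lambda$ is nonzero exactly where $c_\lambda=a_\lambda+b_\lambda=1$, equalling $+1$ at the ``$(1,0)$'' positions counted by the $N(1,0;t)$ and $-1$ at the ``$(0,1)$'' positions counted by the $N(0,1;t)$. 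By Lemma \ref{lem1}(2) no run of $n_1+n_2$ consecutive $c_\lambda=0$ occurs, so every nonzero digit is followed by at most $n_1+n_2-1$ zeros and is charged to a unique gap $t\in\{0,\dots,n_1+n_2-1\}$. Since $I_0+I_1=N_1N_2$, the definition of $\mathcal{M}$ reads $\mathcal{M}=N_1N_2-2I_1$ when $d\ge 0$ and $\mathcal{M}=N_1N_2-2I_0$ when $d<0$, where $I_1$ (resp.\ $I_0$) is the number of ones (resp.\ zeros) in the binary expansion of $|d|$. Thus it suffices to show $I_1=g(\bm{A},\bm{B})$ when $d\ge 0$, and then to reduce the case $d<0$ to this.

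Two elementary facts feed the main step. First, the blocks consisting of a nonzero digit together with its run of trailing zeros tile the whole period, which gives $\sum_t(t+1)\bigl(N(0,1;t)+N(1,0;t)\bigr)=N_1N_2$; hence, writing $g':=g(\bm{B},\bm{A})=\sum_t\bigl(N(0,1;t)+tN(1,0;t)\bigr)$ for the quantity with $\bm{A}$ and $\bm{B}$ interchanged, one obtains $g+g'=N_1N_2$. Second, since the highest nonzero digit dominates all lower ones, $d\ge 0$ holds if and only if the top nonzero digit is $+1$ (and $d<0$ iff it is $-1$); this pins down the sign case distinction intrinsically and removes any boundary ambiguity.

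The heart of the argument, which I expect to be the main obstacle, is to count the ones in the ordinary binary expansion of the signed expansion $\sum_\lambda e_\lambda 2^\lambda$ when $d\ge 0$, since here borrows genuinely cascade. The device is the identity
$$-2^{p}=(2^{p}+2^{p+1}+\cdots+2^{p+t})-2^{p+t+1},$$
applied simultaneously to every $-1$ digit: a $-1$ at position $p$ with $t$ trailing zeros (next nonzero digit at $p+t+1$) is replaced by $+1$'s on its entire block $p,\dots,p+t$ and a single borrow $-1$ delivered to the leading position of the following block. Listing the nonzero digits as $\epsilon_1,\dots,\epsilon_m$ at $p_1<\cdots<p_m$ with $t_k$ trailing zeros after $p_k$: because the blocks tile and each borrow lands precisely on the next block's leading position, every resulting digit can be read off with no further cascade. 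At a block led by $p_k$ the leading digit becomes $1-[\epsilon_{k-1}=-1]\in\{0,1\}$ (its own block always deposits a $+1$ there, whatever the sign of $\epsilon_k$, and it receives a borrow exactly when the preceding digit was $-1$), while each trailing-zero position carries a $1$ exactly when its block's digit is $-1$. All resulting digits lie in $\{0,1\}$, so this is the genuine binary expansion; summing the leading-position ones yields the number of $+1$ digits $\sum_t N(1,0;t)$, and summing the trailing ones yields $\sum_t t\,N(0,1;t)$, so $I_1=g(\bm{A},\bm{B})$. The delicate points to verify carefully are that the top $+1$ digit emits no borrow (so nothing escapes the top and the rewriting does not wrap around) and that the lowest block receives no incoming borrow; both are guaranteed by $d\ge 0$ forcing the top digit to be $+1$.

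Finally, the case $d<0$ follows formally. Here the top nonzero digit is $-1$, so applying the preceding count to $-d=I(\bm{B})-I(\bm{A})$, whose signed digits are the negatives (equivalently, $\bm{A}$ and $\bm{B}$ interchanged), gives $I_1=g(\bm{B},\bm{A})=g'$, whence $I_0=N_1N_2-g'=g$ by the tiling identity $g+g'=N_1N_2$, and therefore $\mathcal{M}=N_1N_2-2I_0=N_1N_2-2g$. Combining the two cases establishes $\mathcal{M}(\bm{A},\bm{B})=N_1N_2-2g(\bm{A},\bm{B})$, as claimed; note that the asymmetric weights $t$ on $N(0,1;t)$ and $1$ on $N(1,0;t)$ in $g$ are exactly what the borrow rewriting produces, the $-1$ digits spreading ones across their trailing zeros and the $+1$ digits surviving as single leading ones.
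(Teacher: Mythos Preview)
The paper does not prove this lemma; it is quoted without proof from \cite{11J1} (Theorem~4 there, stated in the autocorrelation setting). So there is no in-paper argument to compare against, and your self-contained proof is a genuine addition.

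Your argument is correct. The key device---rewriting each $-1$ digit via $-2^{p}=(2^{p}+2^{p+1}+\cdots+2^{p+t})-2^{p+t+1}$ so that every block deposits a $+1$ at its head, ones on its trailing zeros exactly when its leading digit is $-1$, and a single borrow onto the next block head---converts the signed expansion into the ordinary binary expansion of $d$ in one pass with no further cascade, since the resulting digit at each $p_k$ is $1-[\epsilon_{k-1}=-1]\in\{0,1\}$. The count $I_1=\sum_t tN(0,1;t)+\sum_t N(1,0;t)=g(\bm{A},\bm{B})$ then follows: the leading-position ones telescope to $1+\#\{1\le k\le m-1:\epsilon_k=+1\}=\sum_t N(1,0;t)$ because $\epsilon_m=+1$. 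Your observation that $d\ge 0$ forces $\epsilon_m=+1$ is exactly what makes the boundary behave: the top block emits no borrow, and its cyclically defined trailing zeros (which comprise both positions $p_m+1,\dots,N_1N_2-1$ and the leading positions $0,\dots,p_1-1$) all carry zeros, so the cyclic gap $t_m$ used in the definition of $N(\cdot,\cdot;t)$ is consistent with the linear one-count. The reduction of the case $d<0$ via the tiling identity $g(\bm{A},\bm{B})+g(\bm{B},\bm{A})=\sum_t(t+1)\bigl(N(0,1;t)+N(1,0;t)\bigr)=N_1N_2$ is also correct.

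One tacit hypothesis worth stating explicitly is $\bm{A}\neq\bm{B}$ (equivalently $d\neq 0$), so that the block tiling is nonvacuous; in the paper's setting this is automatic since $N_1\neq N_2$ are the minimal periods, and when $d=0$ the identity $\mathcal{M}=N_1N_2=N_1N_2-2g$ holds trivially anyway.
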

In order to reduce this formula on $g(\bm{A}, \bm{B})$, we need:
\begin{lemma}\label{lem3}
$\sum\limits_{t=0}^{n_1+n_2-1}N(1,0;t)=\frac{1}{4}(N_2-1)(N_1+1).$
\end{lemma}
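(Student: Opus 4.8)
The quantity $\sum_{t=0}^{n_1+n_2-1}N(1,0;t)$ simply counts all indices $i$ with $0\le i\le N_1N_2-1$ such that $a_i=1$, $b_i=0$, and the runs of agreement $c_{i+\lambda}=0$ eventually break (which, by Lemma~\ref{lem1}(2), they always do within $n_1+n_2$ steps). Hence the sum telescopes to the plain count
$$\sum_{t=0}^{n_1+n_2-1}N(1,0;t)=\#\{0\le i\le N_1N_2-1 \mid a_i=1,\ b_i=0\}.$$
So the first step is to rewrite the left-hand side as this unconstrained count, noting that the ``eventual break'' condition is automatic and the events indexed by distinct $t$ are disjoint and exhaustive.

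Next I would evaluate $\#\{i : a_i=1,\ b_i=0\}$ by a counting/character argument over the residues modulo $N_1$ and $N_2$. Since $\gcd(N_1,N_2)=1$, the Chinese Remainder Theorem gives a bijection $i \bmod N_1N_2 \leftrightarrow (i\bmod N_1,\ i\bmod N_2)$, and $a_i=T_1(\alpha_1^i)$ depends only on $i\bmod N_1$ while $b_i=T_2(\alpha_2^i)$ depends only on $i\bmod N_2$. Therefore the pair $(a_i,b_i)$ ranges over all combinations according to the product of the individual distributions: among $i\bmod N_1$, the value $a_i=1$ occurs $2^{n_1-1}$ times and $a_i=0$ occurs $2^{n_1-1}-1$ times (the balance property of $\bm{m}$-sequences, which is Lemma~\ref{lem1}(1) with $t=1$), and similarly $b_i=0$ occurs $2^{n_2-1}$ times. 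Multiplying, $\#\{i : a_i=1,\ b_i=0\}=2^{n_1-1}\cdot 2^{n_2-1}=2^{n_1+n_2-2}$.

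Finally I would check that this matches the claimed closed form: $\tfrac14(N_2-1)(N_1+1)=\tfrac14(2^{n_2}-2)(2^{n_1})=\tfrac14\cdot 2^{n_1+1}(2^{n_2-1}-1)\cdot\!\!$—wait, more directly $\tfrac14(2^{n_2}-2)\cdot 2^{n_1}=\tfrac14\cdot 2\cdot(2^{n_2-1}-1)\cdot 2^{n_1}=2^{n_1-1}(2^{n_2-1}-1)$. This does not equal $2^{n_1+n_2-2}$, which signals that the count of $b_i=0$ over a full period $N_1N_2$ is not simply $2^{n_2-1}$ times the number of $a$-residues: one must be careful that $b$ has period $N_2$, so over $0\le i\le N_1N_2-1$ the value $b_i=0$ occurs $N_1\cdot(2^{n_2-1}-1)$ times while $a_i=1$ occurs $N_2\cdot 2^{n_1-1}$ times, and the CRT/product structure then gives $\#\{i:a_i=1,b_i=0\}=\tfrac{1}{N_1N_2}\big(N_2\,2^{n_1-1}\big)\big(N_1(2^{n_2-1}-1)\big)=2^{n_1-1}(2^{n_2-1}-1)=\tfrac14(N_1+1)(N_2-1)$, as required. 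The main subtlety — and the step I would write out most carefully — is precisely this normalization: because the two sequences have \emph{coprime} periods, the joint count over the common period $N_1N_2$ factors as the product of the two marginal counts divided by $N_1N_2$ (equivalently, each residue class mod $N_1$ is hit exactly $N_2$ times and conversely), and one must invoke Lemma~\ref{lem1}(1) for the correct marginal frequencies of $1$'s in $\bm A$ and $0$'s in $\bm B$.
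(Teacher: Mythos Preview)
Your plan is correct and follows essentially the same route as the paper: first observe that the sum over $t$ collapses to $\#\{0\le i\le N_1N_2-1\mid a_i=1,\ b_i=0\}$ (since by Lemma~\ref{lem1}(2) every such $i$ is counted for exactly one $t$), then use the CRT bijection together with Lemma~\ref{lem1}(1) to factor this count as $2^{n_1-1}(2^{n_2-1}-1)=\tfrac14(N_1+1)(N_2-1)$. Your self-correction is right --- the number of zeros in one period of an $\bm m$-sequence of length $2^{n_2}-1$ is $2^{n_2-1}-1$, not $2^{n_2-1}$ --- and once you phrase the CRT step cleanly (each pair $(\lambda\bmod N_1,\mu\bmod N_2)$ corresponds to exactly one $i\bmod N_1N_2$, so the joint count is simply the product of the two marginal counts over single periods), the somewhat awkward ``divide by $N_1N_2$'' normalization in your last paragraph becomes unnecessary.
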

\begin{proof}
The left-hand side is
\begin{align}\label{1}
\sum\limits_{t=0}^{n_1+n_2-1}N(0,1;t)=\#\{0\leq i\leq N_1N_2-1|a_i=T_1(\alpha_1^i)=1,\ b_i=T_2(\alpha_2^i)=0\}
\end{align}
From Lemma \ref{lem1} (1) we know that
\begin{align}\label{2}
\#\{0\leq \lambda\leq N_1-1|T_1(\alpha_1^{\lambda})=1\}=2^{n_1-1},\ \#\{0\leq \mu\leq N_2-1|T_2(\alpha_2^{\mu})=0\}=2^{n_2-1}-1.
\end{align}
By $\gcd(N_1, N_2)=1$ and Chinese Remainder Theorem we have the isomorphism of rings:
$$\varphi: \mathbb{Z}_{N_1}\oplus\mathbb{Z}_{N_2}\simeq\mathbb{Z}_{N_1N_2},\ (\lambda,\ \mu)\mapsto\lambda r_2N_2+\mu r_1N_1\pmod{N_1N_2},$$
where $r_1$ is the integer satisfying $r_1N_1\equiv1\pmod {N_2}$ and $0\leq r_1\leq N_2-1$, $r_2$ is the integer satisfying $r_2N_2\equiv1\pmod {N_1}$ and $0\leq r_2\leq N_1-1$. For each $l\in\mathbb{Z}_{N_1N_2}$, $\varphi^{-1}(l)=(l\pmod{N_1}, l\pmod{N_2})$. Therefore $T_1(\alpha_1^{\lambda})=1$ and $T_2(\alpha_2^{\mu})=0$ if and only if $T_1(\alpha_1^{i})=1$ and $T_2(\alpha_2^{i})=0$ for $i=\varphi((\lambda,\ \mu))=\lambda r_2N_2+\mu r_1N_1$.

Then from (\ref{1}) and (\ref{2}) we get
$$\sum\limits_{t=0}^{n_1+n_2-1}N(1,0;t)=2^{n_1-1}(2^{n_2-1}-1)=\frac{1}{4}(N_2-1)(N_1+1).$$
\end{proof}

\begin{lemma}\label{lem4}
$\mathcal{M}(\bm{A}, \bm{B})=\frac{1}{2}(N_1N_2+N_1-N_2+1)-2\sum\limits_{t=0}^{n_1+n_2-1}tN(0,1;t)$.
\end{lemma}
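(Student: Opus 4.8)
The plan is to obtain Lemma~\ref{lem4} as a direct consequence of Lemma~\ref{lem2} and Lemma~\ref{lem3}, via substitution followed by an elementary simplification. First I would invoke Lemma~\ref{lem2}, which gives $\mathcal{M}(\bm{A},\bm{B}) = N_1N_2 - 2g(\bm{A},\bm{B})$ with
$$g(\bm{A},\bm{B}) = \sum_{t=0}^{n_1+n_2-1}\bigl(tN(0,1;t) + N(1,0;t)\bigr);$$
here the truncation of the sum at $t = n_1+n_2-1$ is exactly the content of Lemma~\ref{lem1}(2), since $N(0,1;t) = N(1,0;t) = 0$ once $t \geq n_1+n_2$.

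Next I would split $g(\bm{A},\bm{B})$ as $\sum_{t=0}^{n_1+n_2-1} tN(0,1;t) + \sum_{t=0}^{n_1+n_2-1} N(1,0;t)$ and replace the second summand by the closed form $\frac{1}{4}(N_2-1)(N_1+1)$ supplied by Lemma~\ref{lem3}. This yields
$$\mathcal{M}(\bm{A},\bm{B}) = N_1N_2 - 2\sum_{t=0}^{n_1+n_2-1} tN(0,1;t) - \frac{1}{2}(N_2-1)(N_1+1).$$
It then remains to simplify the constant contribution: expanding $(N_2-1)(N_1+1) = N_1N_2 + N_2 - N_1 - 1$ gives $N_1N_2 - \frac{1}{2}(N_1N_2 + N_2 - N_1 - 1) = \frac{1}{2}(N_1N_2 + N_1 - N_2 + 1)$, which is precisely the asserted formula.

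I do not expect any genuine obstacle in this lemma: it is a bookkeeping step that isolates the weighted sum $\sum_t tN(0,1;t)$ as the only quantity still to be determined, the rest having been handled by Lemma~\ref{lem2} and Lemma~\ref{lem3}; the one point demanding a moment's care is keeping the signs straight when expanding $(N_2-1)(N_1+1)$ and combining it with $N_1N_2$. The substantive difficulty will instead lie in the subsequent evaluation of $\sum_{t} tN(0,1;t)$, which will require the trace descriptions $a_i = T_1(\alpha_1^i)$, $b_i = T_2(\alpha_2^i)$, the Chinese Remainder Theorem as used in Lemma~\ref{lem3}, and the polynomial $F(x)$ with $F(x)g_2(x)\equiv 1 \pmod{g_1(x)}$ appearing in Theorem~1.
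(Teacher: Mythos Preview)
Your proposal is correct and follows essentially the same route as the paper: invoke Lemma~\ref{lem2}, split off $\sum_t N(1,0;t)$ and replace it by $\tfrac{1}{4}(N_2-1)(N_1+1)$ via Lemma~\ref{lem3}, then simplify $N_1N_2-\tfrac{1}{2}(N_2-1)(N_1+1)$ to $\tfrac{1}{2}(N_1N_2+N_1-N_2+1)$. The paper's proof is exactly this substitution and simplification, so there is nothing further to add.
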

\begin{proof}
From Lemma \ref{lem2} and \ref{lem3} we get
$$g(\bm{A}, \bm{B})=\sum\limits_{t=0}^{n_1+n_2-1}tN(0,1;t)+\frac{1}{4}(N_2-1)(N_1+1).$$
Therefore $$\mathcal{M}(\bm{A}, \bm{B})=N_1N_2-2g(\bm{A}, \bm{B})=N_1N_2-\frac{1}{2}(N_2-1)(N_1+1)-2\sum\limits_{t=0}^{n_1+n_2-1}tN(0,1;t)$$
and$$N_1N_2-\frac{1}{2}(N_2-1)(N_1+1)=\frac{1}{2}(N_1N_2+N_1-N_2+1).$$
\end{proof}
By Lemma \ref{lem4}, the computation of $\mathcal{M}(\bm{A}, \bm{B})$ is reduced to determine $N(0,1;t)\ (0\leq t\leq n_1+n_2-1)$. This is the key point in order to get the value of $\mathcal{M}(\bm{A}, \bm{B})$.

\begin{lemma}\label{lem5}
Assume $n_1, n_2\geq2$ and $\gcd(n_1, n_2)=1$. Then

(1). For $0\leq t\leq n_1+n_2-1$, $N(0,1;t)=U(t)-V(t)$, where
\begin{align*}
		U(t)= \frac{q_1q_2}{2^{t+3}}
\sum_{\substack{c'_0,c_0,c_1,\ldots,c_{t+1}\in\mathbb{F}_2
\\f_1(\alpha_1)=f_2(\alpha_2)=0}}(-1)^{c_0+c_{t+1}}, \quad
V(t)= \frac{q_2}{2^{t+2}}
\sum_{\substack{c_0,c_1,\ldots,c_{t+1}\in\mathbb{F}_2
\\f_2(\alpha_2)=0}}(-1)^{c_0+c_{t+1}},
	\end{align*}
and $f_1(x)=c'_0+c_1x+\cdots+c_{t+1}x^{t+1}$, $f_2(x)=c_0+c_1x+\cdots+c_{t+1}x^{t+1}$.

(2). \begin{align*}
V(t)=\left\{\begin{array}{ll}
2^{n_2-t-2},&\ \text{if}\ 0\leq t\leq n_2-2,\\
1,&\ \text{if}\ t=n_2-1,\\
0,&\ \text{if}\ t\geq n_2.
\end{array}\right.
\end{align*}

(3). $U(t)=U_1(t)+U_2(t)$, where
\begin{align*}
U_1(t)=\left\{\begin{array}{ll}
2^{n_1+n_2-t-3},&\ \text{if}\ 0\leq t\leq n_1+n_2-2,\\
\frac{1}{2},&\ \text{if}\ t=n_1+n_2-1,
\end{array}\right.\
U_2(t)=\left\{\begin{array}{ll}
0,&\ \text{if}\ 0\leq t\leq n_2+l-2\\
2^{n_1-l-2}(-1)^{f_0+1},&\ \text{if}\ t=n_2+l-1,\\
2^{n_1+n_2-t-3}(-1)^{f_0},&\ \text{if}\ n_2+l\leq t\leq n_1+n_2-2,\\
\frac{1}{2}(-1)^{f_0},&\ \text{if}\ t=n_1+n_2-1,
\end{array}\right.
\end{align*}
where $f_0$ and $l$ are determined by $F(x)=f_0+f_1x+\cdots+f_{l-1}x^{l-1}+x^l$ which is the unique polynomial in $\mathbb{F}_2[x]$
satisfying $F(x)g_2(x)\equiv 1\pmod{g_1(x)}$, $\deg F(x)=l\leq n_1-1$, and $g_1(x)$, $g_2(x)$ is the primitive polynomials in $\mathbb{F}_2[x]$ with degree $n_1$ and $n_2$ respectively and $g_1(\alpha_1)=g_2(\alpha_2)=0$.
\end{lemma}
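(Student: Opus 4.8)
The plan is to prove the three parts in the stated order: part~(1) turns the combinatorial count $N(0,1;t)$ into a product of two independent additive‑character sums, one over $\mathbb{F}_{q_1}$ and one over $\mathbb{F}_{q_2}$, and then parts~(2) and~(3) evaluate those sums by elementary divisibility arguments in $\mathbb{F}_2[x]$.

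\textbf{Part (1).} I would first replace each of the $t+3$ conditions defining $N(0,1;t)$ — namely $a_i=0$, $b_i=1$, the $t$ equalities $a_{i+\lambda}=b_{i+\lambda}$, and the inequality $a_{i+t+1}\ne b_{i+t+1}$ — by the identity $\mathbbm{1}[z=\varepsilon]=\tfrac12\sum_{c\in\mathbb{F}_2}(-1)^{c(z+\varepsilon)}$ $(z,\varepsilon\in\mathbb{F}_2)$, introducing summation variables $c'_0,c_0,c_1,\ldots,c_{t+1}$ respectively and an overall factor $2^{-(t+3)}$; the two ``$+\,1$''s (from $b_i=1$ and the inequality) produce the weight $(-1)^{c_0+c_{t+1}}$. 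As $i$ runs over $\mathbb{Z}_{N_1N_2}$ the pair $(x,y)=(\alpha_1^{i},\alpha_2^{i})$ runs bijectively over $\mathbb{F}_{q_1}^{\ast}\times\mathbb{F}_{q_2}^{\ast}$ (the Chinese Remainder Theorem argument from the proof of Lemma~\ref{lem3}), with $\alpha_1^{i+\lambda}=\alpha_1^{\lambda}x$ and $\alpha_2^{i+\lambda}=\alpha_2^{\lambda}y$; using $\mathbb{F}_2$‑linearity of $T_j$ to collapse $\sum_{\lambda}c_{\lambda}T_j(\alpha_j^{\lambda}x)=T_j(f_j(\alpha_j)x)$ and interchanging the order of summation gives
$$N(0,1;t)=\frac{1}{2^{t+3}}\sum_{c'_0,c_0,c_1,\ldots,c_{t+1}\in\mathbb{F}_2}(-1)^{c_0+c_{t+1}}\,S_1S_2,\qquad S_j=\sum_{z\in\mathbb{F}_{q_j}^{\ast}}(-1)^{T_j(f_j(\alpha_j)z)}.$$
Since $S_j=q_j-1$ if $f_j(\alpha_j)=0$ and $S_j=-1$ otherwise, $S_1S_2=q_1q_2\,\mathbbm{1}[f_1(\alpha_1)=0]\mathbbm{1}[f_2(\alpha_2)=0]-q_1\mathbbm{1}[f_1(\alpha_1)=0]-q_2\mathbbm{1}[f_2(\alpha_2)=0]+1$; the first summand gives $U(t)$, the third gives $-V(t)$ (summing out the free variable $c'_0$ turns $q_2/2^{t+3}$ into $q_2/2^{t+2}$), and the remaining two vanish because in each the variable $c_0$ is unconstrained and $\sum_{c_0\in\mathbb{F}_2}(-1)^{c_0}=0$.

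\textbf{Part (2).} Here $f_2(\alpha_2)=0$ is equivalent to $g_2\mid f_2$. If $0\le t\le n_2-2$, then $\deg f_2\le t+1<n_2$ forces $f_2=0$, a single term of weight $1$, so $V(t)=q_2/2^{t+2}=2^{n_2-t-2}$. If $t=n_2-1$, then $f_2\in\{0,g_2\}$; since $g_2$ is primitive it has constant term $1$ and leading coefficient $1$, so $g_2$ also has weight $(-1)^{1+1}=1$ and $V(t)=2q_2/2^{n_2+1}=1$. If $t\ge n_2$, write $f_2=g_2h$ with $\deg h\le m:=t+1-n_2\ge1$; because $\deg g_2=n_2$, the constant term of $f_2$ equals $h(0)$ and its $x^{t+1}$‑coefficient equals the top coefficient $h_m$ of $h$, so the weight is $(-1)^{h_0+h_m}$, which sums to $0$ over $h\in\mathbb{F}_2^{m+1}$ since $h_0$ is free; hence $V(t)=0$.

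\textbf{Part (3).} The observation that unlocks everything is that $f_1$ and $f_2$ share the coefficients $c_1,\ldots,c_{t+1}$, so $f_1=f_2+(c_0+c'_0)$ as polynomials over $\mathbb{F}_2$; since $\deg g_1=n_1\ge2$ we have $g_1\nmid1$, so for a fixed $f_2$ with $g_2\mid f_2$ there is exactly one admissible $c'_0$ precisely when $f_2\bmod g_1\in\mathbb{F}_2$, in which case $g_1\mid f_2$ or $g_1\mid f_2+1$. Hence $U(t)=\frac{q_1q_2}{2^{t+3}}\sum(-1)^{f_2(0)+c_{t+1}(f_2)}$, the sum running over $f_2\in\mathbb{F}_2[x]$ with $\deg f_2\le t+1$, $g_2\mid f_2$, and $f_2\equiv0$ or $1\pmod{g_1}$, and I split it as $U_1(t)+U_2(t)$ according to the residue. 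For $U_1(t)$, coprimality of $g_1,g_2$ forces $g_1g_2\mid f_2$, i.e.\ $f_2=g_1g_2k$ with $\deg k\le t+1-n_1-n_2$: for $t\le n_1+n_2-2$ only $k=0$ occurs (weight $1$), giving $U_1(t)=2^{n_1+n_2-t-3}$, while for $t=n_1+n_2-1$ also $k=1$ occurs — then $f_2=g_1g_2$ has constant term and leading coefficient $1$, weight $(-1)^{1+1}=1$ — giving $U_1(t)=\tfrac12$. For $U_2(t)$, writing $f_2=g_2r$ the condition $f_2\equiv1\pmod{g_1}$ reads $r\equiv g_2^{-1}\equiv F\pmod{g_1}$, i.e.\ $r=F+g_1k$; since $\deg F=l\le n_1-1<n_1$, the admissible $r$ in order of increasing $\deg f_2$ are $F$ (then $\deg f_2=n_2+l$), $F+g_1$ (then $\deg f_2=n_1+n_2$), and larger. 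Matching against $\deg f_2\le t+1$: for $0\le t\le n_2+l-2$ nothing qualifies and $U_2(t)=0$; for $t=n_2+l-1$ only $f_2=g_2F$ qualifies, with constant term $g_2(0)F(0)=f_0$ and $x^{t+1}$‑coefficient $1$, so $U_2(t)=(q_1q_2/2^{t+3})(-1)^{f_0+1}=2^{n_1-l-2}(-1)^{f_0+1}$; for $n_2+l\le t\le n_1+n_2-2$ still only $f_2=g_2F$ qualifies, but now $\deg(g_2F)=n_2+l<t+1$ so its $x^{t+1}$‑coefficient is $0$ and the weight is $(-1)^{f_0}$, giving $U_2(t)=2^{n_1+n_2-t-3}(-1)^{f_0}$; finally for $t=n_1+n_2-1$ both $g_2F$ (weight $(-1)^{f_0}$) and $g_2(F+g_1)$ (constant term $f_0+1$, leading coefficient $1$, so weight $(-1)^{f_0}$) qualify, giving $U_2(t)=2q_1q_2/2^{n_1+n_2+2}\cdot(-1)^{f_0}=\tfrac12(-1)^{f_0}$.

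\textbf{Main obstacle.} Parts~(1)–(2) are routine character‑sum and counting steps; the delicate work is the bookkeeping of part~(3) — determining, for each range of $t$, exactly which polynomials divisible by $g_2$ and congruent to $0$ or $1$ modulo $g_1$ have degree at most $t+1$ — and in particular noticing that once $t+1>\deg(g_2F)=n_2+l$ the coefficient $c_{t+1}$ of $f_2=g_2F$ collapses to $0$, which flips the sign from $(-1)^{f_0+1}$ to $(-1)^{f_0}$; this sign flip, fed into Lemma~\ref{lem4} and summed over $t$, is exactly what collapses the answer to $(-1)^{f_0+1}(2^{n_1-l}-1)$.
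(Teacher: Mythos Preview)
Your proposal is correct and follows essentially the same route as the paper: part~(1) expands the indicator conditions into additive characters, separates the sum over $i$ via CRT into a product over $\mathbb{F}_{q_1}^{\ast}\times\mathbb{F}_{q_2}^{\ast}$, and then uses $S_j=q_j\mathbbm{1}[f_j(\alpha_j)=0]-1$ (the paper writes this as $\sum_{\mathbb{F}_{q_j}}-1$) together with the freeness of $c_0$ to isolate $U(t)-V(t)$; parts~(2) and~(3) then reduce to the same divisibility bookkeeping in $\mathbb{F}_2[x]$. The only cosmetic difference is that in part~(3) you phrase the split as ``$f_2\equiv0$ or $1\pmod{g_1}$'' whereas the paper phrases it as ``$c'_0=c_0$ or $c'_0\ne c_0$'', but these are the same dichotomy since $f_1=f_2+(c_0+c'_0)$.
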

\begin{proof}
(1).
From the definition of $N(0,1;t)$ we know
\begin{align*}
&N(0,1;t)\\
=&\#\{0\leq i\leq N_1N_2-1|T_1(\alpha_1^i)=0,T_2(\alpha_2^i)=1, T_1(\alpha_1^{i+\lambda})
=T_2(\alpha_2^{i+\lambda})\ (1\leq\lambda\leq t), T_1(\alpha_1^{i+t+1})+T_2(\alpha_2^{i+t+1})=1 \}\\
=&\frac{1}{2^{t+3}}\sum_{i=0}^{N_1N_2-1}(1+(-1)^{T_1(\alpha_1^i)})
(1-(-1)^{T_2(\alpha_2^i)})
\prod\limits_{\lambda=1}^t(1+(-1)^{T_1(\alpha_1^{i+\lambda})+T_2(\alpha_2^{i+\lambda})}
(1-(-1)^{T_1(\alpha_1^{i+t+1})+T_2(\alpha_2^{i+t+1})})\\
=&\frac{1}{2^{t+3}}\sum_{i=0}^{N_1N_2-1}\sum_{c_0'
,c_0,c_1,\ldots,c_{t+1}\in\mathbb{F}_2}(-1)^{c_0+c_{t+1}}\cdot
(-1)^{T_1(c_0'\alpha_1^i+c_1\alpha_1^{i+1}+\cdots+c_{t+1}\alpha_1^{i+t+1})
+T_2(c_0\alpha_2^i+c_1\alpha_2^{i+1}+\cdots+c_{t+1}\alpha_2^{i+t+1})}\\
=&\frac{1}{2^{t+3}}\sum_{c_0'
,c_0,c_1,\ldots,c_{t+1}\in\mathbb{F}_2}(-1)^{c_0+c_{t+1}}\sum_{i=0}^{N_1N_2-1}
(-1)^{T_1(\alpha_1^{i}f_1(\alpha_1))+T_2(\alpha_2^{i}f_2(\alpha_2))},
\end{align*}
where $f_1(x)=c'_0+c_1x+\cdots+c_{t+1}x^{t+1}$, $f_2(x)=c_0+c_1x+\cdots+c_{t+1}x^{t+1}\in\mathbb{F}_2[x]$.

Let $i=\lambda N_1+\mu N_2 (0\leq \lambda \leq N_2-1,\ 0\leq \mu\leq N_1-1)$, then
\begin{align*}
&N(0,1;t)\\
=&\frac{1}{2^{t+3}}\sum_{c_0'
,c_0,c_1,\ldots,c_{t+1}\in\mathbb{F}_2}(-1)^{c_0+c_{t+1}}
\left(\sum_{\lambda=0}^{N_2-1}(-1)^{T_2(\alpha_2^{\lambda N_1}f_2(\alpha_2))}\right)
\left(\sum_{\mu=0}^{N_1-1}
(-1)^{T_1(\alpha_1^{\mu N_2}f_1(\alpha_1))}\right)\\
=&\frac{1}{2^{t+3}}\sum_{c_0'
,c_0,c_1,\ldots,c_{t+1}\in\mathbb{F}_2}(-1)^{c_0+c_{t+1}}
\left(\sum_{x\in\mathbb{F}_{q_2}^*}(-1)^{T_2(xf_2(\alpha_2))}\right)
\left(\sum_{y\in\mathbb{F}_{q_1}^{*}}
(-1)^{T_1(yf_1(\alpha_1))}\right),
\end{align*}
since $f_1(\alpha_1)$ is independent of $c_0$, we get
\begin{align*}
&N(0,1;t)\\
=&\frac{1}{2^{t+3}}\sum_{c_0',c_0,c_1,\ldots,c_{t+1}\in\mathbb{F}_2}
(-1)^{c_0+c_{t+1}}
\left(\sum_{x\in\mathbb{F}_{q_2}}(-1)^{T_2(xf_2(\alpha_2))}-1\right)
\left(\sum_{y\in\mathbb{F}_{q_1}}
(-1)^{T_1(yf_1(\alpha_1))}-1\right)\\
=&\frac{1}{2^{t+3}}\left(\sum_{c_0',c_0,c_1,\ldots,c_{t+1}\in\mathbb{F}_2}
(-1)^{c_0+c_{t+1}}
\sum_{x\in\mathbb{F}_{q_2}}(-1)^{T_2(xf_2(\alpha_2))}
\left(\sum_{y\in\mathbb{F}_{q_1}}
(-1)^{T_1(yf_1(\alpha_1))}-1\right)\right.\\&\left.-\sum_{c_0',c_1,\ldots,c_{t+1}\in\mathbb{F}_2}
(-1)^{c_{t+1}}
\left(\sum_{y\in\mathbb{F}_{q_1}}
(-1)^{T_1(yf_1(\alpha_1))}-1\right)\sum_{c_0\in\mathbb{F}_2}(-1)^{c_0}\right)\\
=&\frac{1}{2^{t+3}}\sum_{c_0',c_0,c_1,\ldots,c_{t+1}\in\mathbb{F}_2}
(-1)^{c_0+c_{t+1}}
\sum_{x\in\mathbb{F}_{q_2}}(-1)^{T_2(xf_2(\alpha_2))}
\left(\sum_{y\in\mathbb{F}_{q_1}}
(-1)^{T_1(yf_1(\alpha_1))}-1\right)\\
=&\frac{q_2}{2^{t+3}}\sum_{\substack{c_0',c_0
,c_1,\ldots,c_{t+1}\in\mathbb{F}_2\\f_2(\alpha_2)=0}}(-1)^{c_0+c_{t+1}}
\left(\sum_{f_1(\alpha_1)=0}q_1-1\right)\\
=&\frac{q_1q_2}{2^{t+3}}\sum_{\substack{c_0',c_0
,c_1,\ldots,c_{t+1}\in\mathbb{F}_2\\f_1(\alpha_1)=f_2(\alpha_2)=0}}
(-1)^{c_0+c_{t+1}}-\frac{q_2}{2^{t+3}}\cdot2\cdot\sum_{\substack{c_0
,c_1,\ldots,c_{t+1}\in\mathbb{F}_2\\f_2(\alpha_2)=0}}(-1)^{c_0+c_{t+1}}\\
=&U(t)-V(t).
\end{align*}

(2). We now compute
\begin{align*}
V(t)= \frac{q_2}{2^{t+2}}
\sum_{\substack{c_0,c_1,\ldots,c_{t+1}\in\mathbb{F}_2
\\f_2(\alpha_2)=0}}(-1)^{c_0+c_{t+1}},
\end{align*}
where $f_2(x)=c_0+c_1x+\cdots+c_{t+1}x^{t+1}$.

From $f_2(\alpha_2)=0$ we get $f_2(x)=g_2(x)(\varepsilon_0+
\varepsilon_1x+\cdots+x^s)=\varepsilon_0+\cdots+x^{s+n_2}$.

If $t+1<n_2$, then $f_2(x)\equiv0$, $c_0=c_1=\cdots=c_{t+1}=0$. We get $V(t)=\frac{q_2}{2^{t+2}}=2^{n_2-t-2}$.
If $t=n_2-1$, then $f_2(x)\equiv0$ so that $c_0=c_{t+1}=0$ or $f_2(x)=g_2(x)$ so that $c_0=c_{t+1}=1$. We get $V(t)=\frac{q_2}{2^{t+2}}\cdot(1+1)=2^{n_2-t-1}=1$.
If $t>n_2-1$, for each $s$, $s+n_2\leq t+1$, $c_0=\varepsilon_0$ can be 0 or 1, we get $V(t)=M\cdot\sum_{c_0\in\mathbb{F}_2}(-1)^{c_0}=0$ where $M$ is independent of $c_0$.

(3). In the following we compute $U(t)=U_1(t)+U_2(t)$, where $U_1(t)$
 and $U_2(t)$ correspond to the cases of $c'_0=c_0$ and $c'_0\neq c_0$,
  respectively, that is
  \begin{align*}
U_1(t)=\frac{q_1q_2}{2^{t+3}}\sum_{\substack{c_0
,c_1,\ldots,c_{t+1}\in\mathbb{F}_2\\f_1(\alpha_1)=f_2(\alpha_2)=0}}
(-1)^{c_0+c_{t+1}}.
\end{align*}
From $c'_0=c_0$ we have $f_1(x)=f_2(x)=c_0+c_1x+\cdots+c_{t+1}x^{t+1}$, and $f_1(\alpha_1)=f_2(\alpha_2)=0$ is equivalent to
$g_1(x)g_2(x)\mid f_2(x)$ where $\deg g_1(x)g_2(x)=n_1+n_2$ and $\deg f_2(x)\leq t+1$.

If $t+1<n_1+n_2$, then $f_2(x)\equiv0$ which means $c_0=c_1=\cdots=c_{t+1}=0$. We get  $U_1(t)=\frac{q_1q_2}{2^{t+3}}=2^{n_1+n_2-t-3}$.
If $t+1=n_1+n_2$, then $f_2(x)\equiv0$, $c_0=c_{t+1}=0$ or $f_2(x)=g_1(x)g_2(x)$, $c_0=c_{t+1}=1$. We get $U_1(t)=\frac{q_1q_2}{2^{t+3}}\cdot(1+1)=2^{n_1+n_2-t-2}=\frac{1}{2}$.

For $c'_0\neq c_0$, we have $f_1(x)=f_2(x)+1$ and
  \begin{align*}
U_2(t)=\frac{q_1q_2}{2^{t+3}}\sum_{\substack{c_0
,c_1,\ldots,c_{t+1}\in\mathbb{F}_2\\g_2(x)\mid f_2(x)=c_0+c_1x+\cdots+c_{t+1}x^{t+1}\\
f_2(\alpha_1)=1}}(-1)^{c_0+c_{t+1}}.
\end{align*}

The conditions $g_2(x)|f_2(x)$ and $f_2(\alpha_1)=1$ are equivalent to $g_2(x)F(x)=f_2(x)\equiv 1\pmod{g_1(x)}$ for some $F(x)\in\mathbb{F}_2[x]$,
$0\leq l=\deg F(x)\leq \deg g_1(x)-1=n_1-1$, so that $F(x)=f_0+f_1x+\cdots+f_{l-1}x^{l-1}+x^l$. Such polynomial $F(x)$ in $\mathbb{F}_2[x]$ exists and is unique. Moreover, from $\deg f_2(x)\leq t+1\leq n_1+n_2=\deg g_1(x)g_2(x)$ we know that there are only two possibility for $f_2(x)$: $f_2(x)=g_2(x)F(x)$ and $f_2(x)=g_2(x)F(x)+g_1(x)g_2(x)$ and
for the latter case, we have $t+1=n_1+n_2$. Therefore,

$(\Rmnum{1}).$ For the case $t<n_1+n_2-1$, we get $f_2(x)=g_2(x)F(x)=f_0+\cdots+x^{n_2+l}$.

If $t<n_2+l-1$,  then $n_2+l=\deg f_2(x)\leq t+1\leq n_2+l-1$, which is impossible, we get $U_2(t)=0$.

If $t=n_2+l-1$, then $c_0=f_0$, $c_{t+1}=1$. We get
$U_2(t)=\frac{q_1q_2}{2^{t+3}}(-1)^{f_0+1}=2^{n_1-l-2}(-1)^{f_0+1}$.

If $n_2+l\leq t<n_1+n_2-1$, then $c_0=f_0$, $c_{t+1}=0$, and
$U_2(t)=\frac{q_1q_2}{2^{t+3}}(-1)^{f_0+0}=2^{n_1+n_2-t-3}(-1)^{f_0}$.

$(\Rmnum{2}).$ At last for the case $t=n_1+n_2-1$, then $n_2+l\leq t$.
When $f_2(x)=F(x)g_2(x)$ we have $c_0=f_0$, $c_{t+1}=0$. When $f_2(x)=g_2(x)F(x)+g_1(x)g_2(x)$,
we have $c_0=f_0+1$, $c_{t+1}=1$. We get $U_2(t)=\frac{q_1q_2}{2^{t+3}}\cdot2\cdot(-1)^{f_0+0}=\frac{1}{2}(-1)^{f_0}$. This completes the proof of Lemma \ref{lem5}.
\end{proof}

Now we come to prove Theorem 1.

\textbf{Proof of Theorem 1}

From Lemma \ref{lem4} we know that
$$\mathcal{M}({\bm{A}},{\bm{B}})=\frac{1}{2}(N_1N_2+N_1-N_2+1)-2\sum\limits_{t=0}^{n_1+n_2-1}tN(0,1;t),$$
where by Lemma \ref{lem5}, $N(0,1;t)=U_1(t)+U_2(t)-V(t)$ and
\begin{align*}
&\sum\limits_{t=0}^{n_1+n_2-1}t(U_1(t)+U_2(t))\\
=&2^{n_1+n_2}\left(\sum\limits_{t=1}^{n_1+n_2-2}\frac{t}{2^{t+3}}
+\frac{n_2+l-1}{2^{n_2+l+2}}(-1)^{f_0+1}+
\sum\limits_{t=n_2+l}^{n_1+n_2-2}\frac{t}{2^{t+3}}(-1)^{f_0}\right)
+\frac{n_1+n_2-1}{2}(1+(-1)^{f_0})\\
=&2^{n_1+n_2}\left[\frac{1}{8}(2-\frac{n_1+n_2}{2^{n_1+n_2-2}})
-\frac{n_2+l-1}{2^{n_2+l+2}}(-1)^{f_0}+\frac{1}{8}(-1)^{f_0}
(\frac{n_2+l-1}{2^{n_2+l-1}}-\frac{n_1+n_2}{2^{n_1+n_2-2}})\right]
+\frac{n_1+n_2-1}{2}(1+(-1)^{f_0})\\
&~~~~~~~~~~~~~~~~~~~~~~~~~~~~~~~~~~~~~~~~(\text{since\ $\sum\limits_{t=1}^{\lambda}\frac{t}{2^t}=2-
\frac{\lambda+2}{2^{\lambda}}\ (\lambda\geq1)$\ which can be proved by induction})\\
=&2^{n_1+n_2-2}+\frac{2^{n_l}}{2^{l+1}}(-1)^{f_0}-\frac{1}{2}(1+(-1)^{f_0}),
\end{align*}and
\begin{align*}
\sum\limits_{t=0}^{n_1+n_2-1}tV(t)
=2^{n_2}\sum\limits_{t=1}^{n_2-2}\frac{t}{2^{t+2}}+n_2-1
=2^{n_2-2}(2-\frac{n_2}{2^{n_2-2}})+n_2-1=2^{n_2-1}-1,
\end{align*}
then we obtain
\begin{align*}
\mathcal{M}({\bm{A}},{\bm{B}})&=\frac{1}{2}(2^{n_1+n_2}-2^{n_2+1}+2)
-2(2^{n_1+n_2-2}+2^{n_l-l-1}(-1)^{f_0}-\frac{1}{2}(1+(-1)^{f_0})-2^{n_2-1}+1)\\
&=(-1)^{f_0+1}(2^{n_1-l}-1).
\end{align*}
This completes the proof of Theorem 1.

\section*{Acknowledgment}
Keqin Feng's research was supported by the National Natural Science Foundation of China under Grant No:12031011.


\begin{thebibliography}{9}
	
	\bibitem{1C1} Z. Chen, Z. Niu, Y. Sang, and C. Wu, ``Arithmetic autocorrelation of binary $m$-sequences," \emph{Cryptologia}, Apr. 2022, doi: 10.1080/01611194.2022.2071116.
	
	\bibitem{2C2} Z. Chen, Z. Niu, and A. Winterhof, ``Arithmetic crosscorrelation of pseudorandom binary sequences of coprime periods," \emph{IEEE Trans. Inf. Theory}, vol. 68, no. 11, pp. 7538--7544, Nov. 2022.
	
	\bibitem{3G1} M. Goresky and A. Klapper, ``Arithmetic crosscorrelations of feedback with carry shift register sequences," \emph{IEEE Trans. Inf. Theory}, vol. 43, no. 4, pp. 1342--1345, Jul. 1997.
	
	\bibitem{4G2} M. Goresky and A. Klapper, ``Some results on the arithmetic correlation of sequences,"  in \emph{Sequences and Their Applications} (Lecture Notes
in Computer Science), vol. 5203. Berlin, Germany: Springer, 2018,
pp. 71--80.
	\bibitem{5G3} M. Goresky and A. Klapper, ``Statistical properties of the arithmetic correlation of sequences," \emph{Int. J. Found. Comput. Sci.}, vol. 22, no. 6, pp. 1297--1315, Sep. 2011.
	
	\bibitem{6G4} M. Goresky and A. Klapper, \emph{Algebraic Shift Register Sequences}. Cambridge, U.K.: Cambridge Univ. Press, 2012.
	

\bibitem{11J1} X. Jing, A. Zhang, and K. Feng, ``Arithmetic autocorrelation distribution of binary $m$-sequences," \emph{IEEE Trans. Inf. Theory}, vol. 69, no. 9, pp. 6040--6047, Sep. 2023.

	\bibitem{9Mand} D. Mandelbaum, ``Arithmetic codes with large distance," \emph{IEEE Trans. Inf. Theory}, vol. 13, no. 2, pp. 237--242, Apr. 1967.
	
\end{thebibliography}
\end{document}